\DeclareSIUnit\linesofcode{loc}
\algnewcommand\algorithmicinput{\textbf{Input:}}
\algnewcommand\INPUT{\item[\algorithmicinput]}
\algnewcommand\algorithmicoutput{\textbf{Output:}}
\algnewcommand\Output{\item[\algorithmicoutput]}
\algrenewcommand\algorithmicindent{1ex}
\lstdefinelanguage{evm}{
  morekeywords={STOP, ADD, MUL, SUB, DIV, SDIV, MOD, SMOD, ADDMOD, MULMOD, EXP, SIGNEXTEND,
  LT, GT, SLT, SGT, EQ, ISZERO, AND, OR, XOR, NOT, BYTE, SHL, SHR, SAR, SHA3,
  ADDRESS, BALANCE, ORIGIN, CALLER, CALLVALUE, CALLDATALOAD, CALLDATASIZE,
  CALLDATACOPY, CODESIZE, CODECOPY, GASPRICE, EXTCODESIZE, EXTCODECOPY,
  RETURNDATASIZE, RETURNDATACOPY, EXTCODEHASH,
  BLOCKHASH, COINBASE, TIMESTAMP, NUMBER, DIFFICULTY, GASLIMIT,
  POP, MLOAD, MSTORE, MSTORE8, SLOAD, SSTORE, JUMP, JUMPI, PC, MSIZE, GAS, JUMPDEST,
  PUSH, SWAP, SWAP1,SWAP2,SWAP3,SWAP4,SWAP5,SWAP6,SWAP7,SWAP8,SWAP9,SWAP10,SWAP11,
  SWAP12,SWAP13, SWAP14,SWAP15,SWAP16, DUP, DUP1,DUP2,DUP3,DUP4,DUP5,DUP6,DUP7,
  DUP8,DUP9,DUP10,DUP11,DUP12,DUP13, DUP14,DUP15,DUP16,
  LOG0, LOG1, LOG2, LOG3, LOG4, JUMPTO, JUMPIF, JUMPV, JUMPSUB, JUMPSUBV, BEGINSUB, BEGINDATA,
  RETURNSUB, PUTLOCAL, GETLOCAL, CREATE, CALL, CALLCODE, RETURN, DELEGATECALL, CREATE2,
  STATICCALL, REVERT, INVALID, SELFDESTRUCT, LOG},
  sensitive=true,
  alsoletter={0,1,2,3,4,5,6,7,8,9,a,b,c,d,e,f}
}
\begin{document}
\title{Blockchain Superoptimizer\thanks{This research is supported by the UK
    Research Institute in Verified Trustworthy Software Systems and partially
    supported by funding from Google.}}
%
%
\author{Julian Nagele\inst{1}\orcidID{0000-0002-4727-4637}\and
  Maria A Schett\inst{2}\orcidID{0000-0003-2919-5983}}
\authorrunning{J.~Nagele and M.A.~Schett}
%
\institute{Queen Mary University of London, UK
  \email{mail@jnagele.net}\\ \and
  University College London, UK
  \email{mail@maria-a-schett.net}}
\maketitle              
\begin{abstract}
  In the blockchain-based, distributed computing platform Ethereum,
programs called smart contracts are compiled to bytecode and
executed on the Ethereum Virtual Machine (EVM).
Executing EVM bytecode is subject to monetary fees---a clear
optimization target.
Our aim is to superoptimize EVM bytecode by encoding the operational
semantics of EVM instructions as SMT formulas and leveraging a
constraint solver to automatically find cheaper bytecode.
We implement this approach in our EVM Bytecode SuperOptimizer
\textsf{ebso} and perform two large scale evaluations on real-world data
sets.


  \keywords{Superoptimization, Ethereum, Smart Contracts, SMT}
\end{abstract}
%




\newcommand*{\mas}[1]{\textcolor{green!75!black}{(MS: {#1})}}
\newcommand*{\jn}[1]{\textcolor{blue!75!black}{(JN: {#1})}}
\newcommand*{\gy}[1]{\textcolor{magenta!75!black}{(GY: {#1})}}

\newcommand*{\sh}[1]{\textcolor{black!40}{{#1}}} 

\newcommand*{\github}{\textsf{github}\xspace}
\newcommand*{\google}{\textsf{Google}\xspace}
\newcommand*{\ebso}{\textsf{ebso}\xspace}
\newcommand*{\ethereum}{\textsf{Ethereum}\xspace}
\newcommand*{\evm}{\textsf{EVM}\xspace}
\newcommand*{\etherscan}{\textsf{EtherScan}\xspace}
\newcommand*{\geth}{\textsf{geth}\xspace}
\newcommand*{\isabelle}{\textsf{Isabelle}\xspace}
\newcommand*{\llvm}{\textsf{LLVM}\xspace}
\newcommand*{\hol}{\textsf{HOL}\xspace}
\newcommand*{\svcomp}{\textsf{SV-COMP}\xspace}
\newcommand*{\solidity}{\textsf{Solidity}\xspace}
\newcommand*{\lem}{\textsf{Lem}\xspace}
\newcommand*{\lllang}{\textsf{LLL}\xspace}
\newcommand*{\viper}{\textsf{Vyper}\xspace}
\newcommand*{\solc}{\textsf{solc}\xspace}
\newcommand*{\ZE}{\textsf{Z3}\xspace}
\newcommand*{\CVCF}{\textsf{CVC4}\xspace}
\newcommand*{\vampire}{\textsf{Vampire}\xspace}
\newcommand*{\ocaml}{\textsf{OCaml}\xspace}
\newcommand*{\ounit}{\textsf{OUnit}\xspace}
\newcommand*{\webassembly}{\textsf{WebAssembly}\xspace}
\newcommand*{\alive}{\textsf{Alive}\xspace}

\newcommand{\smt}{SMT\xspace}

\newcommand*{\bso}{\textsf{bso}\xspace}
\newcommand*{\uso}{\textsf{uso}\xspace}
\newcommand*{\usoQ}{\textsf{uso$_?$}\xspace}

\newcommand*{\oc}[1]{{\tt #1}\xspace}

\newcommand*{\pusharg}[1]{#1}

\newcommand*{\PUSH}[1]{\oc{PUSH}\,\pusharg{#1}}
\newcommand*{\POP}{\oc{POP}}
\newcommand*{\ADD}{\oc{ADD}}
\newcommand*{\SUB}{\oc{SUB}}
\newcommand*{\SDIV}{\oc{SDIV}}
\newcommand*{\AND}{\oc{AND}}
\newcommand*{\XOR}{\oc{XOR}}
\newcommand*{\TIMESTAMP}{\oc{TIMESTAMP}}
\newcommand*{\OR}{\oc{OR}}
\newcommand*{\ISZERO}{\oc{ISZERO}}
\newcommand*{\BLOCKHASH}{\oc{BLOCKHASH}}
\newcommand*{\BALANCE}{\oc{BALANCE}}
\newcommand*{\SWAP}[1]{\oc{SWAP#1}}
\newcommand*{\DUP}[1]{\oc{DUP#1}}
\newcommand*{\JUMP}{\oc{JUMP}}
\newcommand*{\JUMPI}{\oc{JUMPI}}
\newcommand*{\JUMPDEST}{\oc{JUMPDEST}}
\newcommand*{\STOP}{\oc{STOP}}
\newcommand*{\EXP}{\oc{EXP}}
\newcommand*{\LT}{\oc{LT}}
\newcommand*{\SLT}{\oc{SLT}}
\newcommand*{\NOT}{\oc{NOT}}
\newcommand*{\ADDRESS}{\oc{ADDRESS}}
\newcommand*{\MLOAD}{\oc{MLOAD}}
\newcommand*{\SLOAD}{\oc{SLOAD}}
\newcommand*{\MSTORE}{\oc{MSTORE}}
\newcommand*{\SSTORE}{\oc{SSTORE}}
\newcommand*{\SHAE}{\oc{SHA3}}
\newcommand*{\LOG}{\oc{LOG}}
\newcommand*{\NUMBER}{\oc{NUMBER}}

\newcommand*{\occ}[1]{{\tt #1}}
\newcommand*{\bsh}[1]{{\tt #1}}

\newcommand*{\da}{$\delta\alpha$}

\newcommand*{\ite}{\textit{ite}}

\newcommand*{\halt}{\mathsf{hlt}}
\newcommand*{\stack}{\mathsf{st}}
\newcommand*{\gas}{\mathsf{g}}
\newcommand*{\stackctr}{\mathsf{c}}
\newcommand*{\inpt}{\vec{x}}
\newcommand*{\Vars}{\mathcal{V}}
\newcommand*{\pres}{\mathsf{pres}}
\newcommand*{\UI}{\mathsf{UI}}
\newcommand*{\ui}{\mathsf{ui}}
\newcommand*{\tmpl}{\mathsf{a}}
\newcommand*{\CI}{\mathsf{CI}}
\newcommand*{\instr}{\mathsf{instr}}
\newcommand*{\storage}{\mathsf{str}}
\newcommand*{\TV}{\nu}

\newcommand*{\ggOpt}{\num{19}}
\newcommand*{\ggMaxTarget}{\lstinline!DUP1 PUSH 0 DUP1 DUP4 MLOAD DUP2 PUSH 0 DUP5!}
\newcommand*{\ggMaxSource}{\lstinline!DUP1 MLOAD DUP2 SWAP1 PUSH 0 SWAP1 DUP2 SWAP1 DUP2 DUP1 DUP1! }
\newcommand*{\ggOptOpt}{\num{10}}
\newcommand*{\ggTotal}{\num{66}}
\newcommand*{\ggRelative}{\SI{22.62}{\percent}}
\newcommand*{\ggBlockcount}{\num{2743}}
\newcommand*{\ggMaxGas}{\SI{9}{\gas}}

\newcommand{\ggData}{{\sf GG}\xspace}
\newcommand{\ggRaw}{{\sf GG_\text{raw}}\xspace}
\newcommand{\ggUso}{{\sf GG_\text{opt}}\xspace}

\newcommand*{\bcUsoOptOpt}{\num{393}}
\newcommand*{\bcUsoMaxTarget}{\lstinline!!}
\newcommand*{\bcBsoMaxGas}{\SI{5220}{\gas}}
\newcommand*{\bcUsoMaxSource}{\lstinline!PUSH 0 PUSH 4 SLOAD SUB PUSH 4 DUP2 SWAP1 SSTORE POP! }
\newcommand*{\bcBsoOnly}{\num{21}}
\newcommand*{\bcBsoTotal}{\num{6903}}
\newcommand*{\bcBsoOpt}{\num{184}}
\newcommand*{\bcUsoTotal}{\num{17871}}
\newcommand*{\bcUsoMaxGas}{\SI{5220}{\gas}}
\newcommand*{\bcUsoOpt}{\num{943}}
\newcommand*{\bcBsoMaxSource}{\lstinline!PUSH 0 PUSH 4 SLOAD SUB PUSH 4 DUP2 SWAP1 SSTORE POP!}
\newcommand*{\bcBsoMaxTarget}{\lstinline!!}
\newcommand*{\bcBlockcount}{\num{61217}}
\newcommand*{\bcUsoRelative}{\SI{29.63}{\percent}}
\newcommand*{\bcBsoRelative}{\SI{46.1}{\percent}}

\newcommand*{\bcData}{{\sf EthBC}\xspace}
\newcommand*{\bcRaw}{{\sf EthBC_{\text{raw}}}\xspace}
\newcommand*{\bcUso}{{\sf EthBC_{\text{\uso}}}\xspace}
\newcommand*{\bcUsoO}{{\sf EthBC_{\text{\uso$\top$}}}\xspace}
\newcommand*{\bcBso}{{\sf EthBC_{\text{\bso}}}\xspace}

\newcommand{\cf}{\emph{cf.}\xspace}
\newcommand{\ie}{\emph{i.e.}\xspace}
\newcommand{\Ie}{\emph{I.e.}\xspace}
\newcommand{\eg}{\emph{e.g.}\xspace}
\newcommand{\Eg}{\emph{E.g.}\xspace}
\newcommand{\vs}{\emph{vs.}\xspace}


\section{Introduction} \label{sec:introduction}

\ethereum is a blockchain-based, distributed computing platform
featuring a quasi-Turing complete programming
language. In \ethereum, programs are called smart
contracts, compiled to bytecode and executed on the Ethereum Virtual
Machine (\evm). In order to avoid network spam and to ensure
termination, execution is subject to monetary
fees. These fees are specified in units of \emph{gas}, \ie, any
instruction executed on the \evm has a cost in terms of gas, possibly
depending on its input and the execution state.

\begin{figure}[t]
  \centering
  \input{ebso-overview}
  \vspace{-1cm}
  \caption{Overview over \ebso.}
  \label{fig:ebso-overview}
\end{figure}

\begin{example} \label{ex:intro} Consider the expression
  $\pusharg{3} + (\pusharg{0} - x)$, which corresponds to the program
  \lstinline!PUSH 0 SUB PUSH 3 ADD!. The \evm is a stack-based
  machine, so this program takes an argument $x$ from the stack to
  compute the expression above.  However, clearly one can save the
  \ADD instruction and instead compute $\pusharg{3} - x$, \ie,
  optimize the program to \lstinline!PUSH 3 SUB!.
  The first program costs \SI{12}{\gas} to execute on the
  \evm, while the second costs only \SI{6}{\gas}.
\end{example}

We build a tool that automatically finds this optimization and similar
others that are missed by state-of-the-art smart contract compilers:
the \textbf{\textsf{E}}\textsf{VM} \textbf{b}ytecode
\textbf{s}uper\textbf{o}ptimizer \ebso.
The use of \ebso for Example~\ref{ex:intro} is sketched in
Figure~\ref{fig:ebso-overview}.
To find these optimizations, \ebso implements
\emph{superoptimization}. Superoptimization is often considered too
slow to use during software development except for special
circumstances. We argue that compiling smart contracts is such a
circumstance. Since bytecode, once it has been deployed to the
blockchain, cannot change again, spending extra time
optimizing a program that may be called many times, might well be worth
it. Especially, since it is very clear what ``worth it'' means: the
clear cost model of gas makes it easy to define optimality.%
\footnote{Of course setting the gas price of individual instructions,
  such that it accurately reflects the computational cost is hard, and
  has been a problem in the past see \eg\
  \href{https://news.ycombinator.com/item?id=12557372}{news.ycombinator.com/item?id=12557372}.}

Our main contributions are:
\begin{enumerate*}[label=\emph{(\roman*)}]
\item an \smt encoding of a subset of \evm bytecode semantics
  (Section~\ref{sec:encoding}),
\item an implementation of two flavors of superoptimization: \emph{basic}, where
  the constraint solver is used to check equivalence of enumerated candidate
  instruction sequences, and \emph{unbounded}, where also the enumeration itself
  is shifted to the constraint solver (Section~\ref{sec:implementation}), and
\item two large scale evaluations (Section~\ref{sec:evaluation}). First, we run
  \ebso on a collection of smart contracts from a programming competition aimed
  at producing the cheapest \evm bytecode for given programming challenges. Even
  in this already highly optimized data set \ebso still finds
  \num{19}~optimizations.  In the second evaluation we compare the performance
  of basic and unbounded superoptimization on the \num{2500} most called smart
  contracts from the \ethereum blockchain and find that, in our setting,
  unbounded superoptimization outperforms basic superoptimization.
\end{enumerate*}


\section{Ethereum and the EVM} \label{sec:preliminaries}

Smart contracts in \ethereum are usually written in a specialized
high-level language such as \solidity or \viper and then compiled
into \emph{bytecode}, which is executed on the \evm.
The \evm is a virtual machine formally defined in the \ethereum yellow
paper~\cite{2018_wood}. It is based on a \emph{stack}, which holds
\emph{words}, \ie, bit vectors, of size \num{256}.\footnote{This word
  size was chosen to facilitate the cryptographic computations such as
  hashing that are often performed in the \evm.}
The maximal \emph{stack size} is set to~$2^{10}$. Pushing words onto a full
stack leads to a \emph{stack overflow}, while removing words from the empty
stack leads to a \emph{stack underflow}. Both lead the \evm to enter an
\emph{exceptional halting} state.
The \evm also features a volatile \emph{memory}, a word-addressed byte array,
and a persistent key-value \emph{storage}, a word-addressed word array, whose
contents are stored on the \ethereum blockchain.
The bytecode directly corresponds to more human-friendly
\emph{instructions}.
For example, the \evm bytecode \oc{6029600101} encodes the following sequence of
instructions: \lstinline!PUSH 41 PUSH 1 ADD!.
Instructions can be classified into different categories, such as
\emph{arithmetic} operations,\eg\ \ADD and \SUB for addition and subtraction,
\emph{comparisons}, \eg\ \SLT\ for signed less-than, and \emph{bitwise
  operations}, like \AND\ and \NOT.
The instruction \lstinline!PUSH! pushes a word onto the stack, while \POP\
removes the top word.\footnote{We gloss over the 32~different \lstinline!PUSH!
  instructions depending on the size of the word to be pushed.}
Words on the stack can be duplicated using \DUP{$i$} and swapped using
\SWAP{$i$} for $1 \leqslant i \leqslant 16$, where $i$ refers to the
$i$th word below the top.
Some instructions are specific to the blockchain domain, like
\BLOCKHASH, which returns the hash of a recently mined block, or
\ADDRESS, which returns the address of the currently executing
account.
Instructions for control flow include \eg\ \JUMP, \JUMPDEST, and
\STOP.

We write $\delta(\iota)$ for the number of words that instruction~$\iota$ takes
from the stack, and $\alpha(\iota)$ for the number of words $\iota$ adds onto the
stack.  A \emph{program} $p$ is a finite sequence of instructions.  We define
the \emph{size} $|p|$ of a program as the number of its instructions.
To execute a program on the \ethereum blockchain, the caller has to pay
\emph{gas}. The amount to be paid depends on both the instructions of the
program and the input: every instruction comes with a \emph{gas cost}. For
example, \lstinline!PUSH! and \ADD\ currently cost \SI{3}{\gas}, and therefore executing
the program above costs \SI{9}{\gas}. Most instructions have a fixed cost, but
some take the current state of the execution into account. A prominent example of
this behavior is storage.  Writing to a zero-valued key conceptually allocates
new storage and thus is more expensive than writing to a key that is already in
use, \ie, holds a non-zero value. The gas prices of all instructions are specified
in the yellow paper~\cite{2018_wood}.




\section{Superoptimization}\label{sec:superopt}

Given a \emph{source} program $p$ superoptimization tries to generate a
\emph{target} program~$p'$ such that
\begin{enumerate*}[label=\emph{(\roman*})]
\item $p'$ is equivalent to $p$, and
\item the cost of $p'$ is minimal with respect to a given cost function $C$.
\end{enumerate*}
This problem arises in several contexts with different source and target
languages. In our case, \ie, for a binary recompiler, both source and target
are \evm bytecode.

A standard approach to superoptimization and synthesis~\cite{1987_massalin,
  2011_gulwani_et_al, 2013_schkufza_et_al, 2015_srinivasan_et_al} is to search
through the space of \emph{candidate instruction} sequences of increasing cost
and use a constraint solver to check whether a candidate correctly implements
the source program.  The solver of choice is usually a Satisfiability Modulo
Theories (\smt) solver, which operates on first-order formulas in combination
with background theories, such as the theory of bit vectors or arrays.
Modern \smt solvers are highly optimized and implement techniques to handle
arbitrary first-order formulas, such as E-matching. With
increasing cost of the candidate sequence, the search space dramatically
increases. To deal with this explosion one idea is to hand some of the search to
the solver, by using templates~\cite{2011_gulwani_et_al, 2015_srinivasan_et_al}.
Templates leave holes in the target program, \eg for immediate arguments of
instructions, that the solver must then fill.  A candidate program is correct if
the encoding is satisfiable, \ie, if the solver finds a model.  Constructing the
target program then amounts to obtaining the values for the templates from the
model. This approach is shown in Algorithm~\ref{alg:so}(\subref{alg:bso}).

\begin{figure}[t]
  \begin{subalgorithm}[b]{.5\textwidth}
    \begin{algorithmic}[1]
      \Function{BasicSo}{$p_s, C$}
      \State $n \gets 0$
      \While{true}
      \ForAll{$p_t \in \{ p \mid C(p) = n \}$}
      \State $\chi \gets \Call{EncodeBso}{p_s, p_t}$
      \If{\Call{Satisfiable}{$\chi$}}
      \State $m \gets \Call{GetModel}{\chi}$
      \State $p_t \gets \Call{DecodeBso}{m}$
      \State \textbf{return} $p_t$
      \EndIf
      \EndFor
      \State $n \gets n + 1$
      \EndWhile
      \EndFunction
    \end{algorithmic}
    \subcaption{Basic Superoptimization.}
    \label{alg:bso}
  \end{subalgorithm}
  \begin{subalgorithm}[b]{.5\linewidth}
    \begin{algorithmic}[1]
      \Function{UnboundedSo}{$p_s, C$}
      \State $p_t \gets p_s$
      \State $\chi \gets \Call{EncodeUso}{p_t} \land \Call{Bound}{p_t, C}$
      \While{\Call{Satisfiable}{$\chi$}}
      \State $m \gets \Call{GetModel}{\chi}$
      \State $p_t \gets \Call{DecodeUso}{m}$
      \State $\chi \gets \chi \land \Call{Bound}{p_t, C}$
      \EndWhile
      \State \textbf{return} $p_t$
      \EndFunction
    \end{algorithmic}
    \caption{Unbounded Superoptimization.}
    \label{alg:uso}
  \end{subalgorithm}
  \captionsetup{labelformat=alglabel}
  \caption{Superoptimization.}
  \label{alg:so}
\end{figure}

\emph{Unbounded superoptimization}~\cite{2002_joshi_et_al, 2017_jangda_et_al}
pushes this idea further.
Instead of searching through candidate programs and calling the \smt solver on
them, it shifts the search into the solver, \ie, the encoding expresses all
candidate instruction sequences of any length that correctly implement the
source program.
This approach is shown in Algorithm~\ref{alg:so}(\subref{alg:uso}): if the
solver returns satisfiable then there is an instruction sequence that correctly
implements the source program. Again, this target program is reconstructed from
the model. If successful, a constraint asking for a cheaper program is added and
the solver is called again.  Note that this also means that unbounded
superoptimization can stop with a correct, but possibly non-optimal solution. In
contrast, basic superoptimization cannot return a correct solution until
it has finished.

The main ingredients of superoptimization in Algorithm~\ref{alg:so}
are \textsc{Encode\-Bso/Uso} producing the \smt encoding, and
\textsc{DecodeBso/Uso} reconstructing the target program from a model.
We present our encodings for the semantics of \evm bytecode in the
following section.


\section{Encoding} \label{sec:encoding}

We start by encoding three parts of the \evm execution state:
\begin{enumerate*}[label=\emph{(\roman*)}]
\item the stack,
\item gas consumption, and
\item whether the execution is in an exceptional halting state.
\end{enumerate*}
We model the stack as an uninterpreted function together with a
counter, which points to the next free position on the stack.

\begin{definition}
  A state $\sigma = \langle \stack, \stackctr, \halt, \gas \rangle$
  consists of
  \begin{enumerate}[label=\emph{(\roman*)}]
  \item a function $\stack(\Vars,j,n)$ that, after the program has
    executed $j$ instructions on input variables from $\Vars$ returns
    the word from position $n$ in the stack,
  \item a function $\stackctr(j)$ that returns the number of words on the stack
    after executing $j$ instructions. Hence
    $\stack(\Vars,j,\stackctr(j) - 1)$ returns the top of the stack.
  \item a function $\halt(j)$ that returns true ($\top$) if
    exceptional halting has occurred after executing $j$ instructions,
    and false ($\bot$) otherwise.
  \item a function $\gas(\Vars, j)$ that returns the amount of gas
    consumed after executing $j$ instructions.
  \end{enumerate}
\end{definition}
Here the functions in $\sigma$ represent \emph{all} execution states of a
program, indexed by variable $j$.
\begin{example} \label{ex:exec}
  Symbolically executing the program \lstinline!PUSH 41 PUSH 1 ADD!
  using our representation above we have
  \begin{xalignat*}{4}
    \gas(0) &= 0  & \gas(1) &= 3 & \gas(2) &= 6 & \gas(3) &= 9\\
    \stackctr(0) &= 0 & \stackctr(1) &= 1 & \stackctr(2) &= 2 & \stackctr(3) &= 1\\
    \stack(1, 0) &= 41  & \stack(2, 0) &= 41 & \stack(2, 1) &= 1 & \stack(3, 0) &= 42
  \end{xalignat*}
  and $\halt(0) = \halt(1) = \halt(2) = \halt(3) = \bot$.
\end{example}

Note that this program does not consume any words that were already on
the stack. This is not the case in general. For instance we might be
dealing with the body of a function, which takes its arguments from
the stack.  Hence we need to ensure that at the beginning of the
execution sufficiently many words are on the stack. To this end we
first compute the \emph{depth} $\hat\delta(p)$ of the program~$p$,
\ie, the number of words a program $p$ consumes.  Then we take
variables $x_0, \ldots, x_{\hat\delta(p)-1}$ that represent the input
to the program and initialize our functions accordingly.
\begin{definition}
  \label{def:init}
  For a program with $\hat\delta(p) = d$ we initialize the state~$\sigma$ using
  \[ \gas_\sigma(0) = 0 \land \halt_\sigma(0) = \bot \land {\stackctr_\sigma(0) = d} \land
    {\bigwedge_{0 \leqslant \ell < d} \stack_\sigma(\Vars, 0, \ell) = x_\ell} \]
\end{definition}

For instance, for the program consisting of the single instruction \ADD\ we set
$\stackctr(0) = 2$, and $\stack(\{ x_0,x_1 \}, 0, 0) = x_0$ and
$\stack(\{ x_0,x_1 \}, 0, 1) = x_1$. We then have
$\stack(\{ x_0,x_1 \}, 1, 0) = x_1 + x_2$.

To encode the effect of \evm instructions we build \smt formulas to capture
their operational semantics. That is, for an instruction~$\iota$ and a state
$\sigma$ we give a formula~$\tau(\iota,\sigma, j)$ that defines the effect on
state $\sigma$ if $\iota$ is the $j$-th instruction that is executed. Since
large parts of these formulas are similar for every instruction and only depend
on $\delta$ and $\alpha$ we build them from smaller building blocks.

\begin{definition}
  For an instruction $\iota$ and state $\sigma$ we define:
  \begin{align*}
    \tau_\gas(\iota, \sigma, j) &\equiv \gas_\sigma(\Vars, j + 1) = \gas_\sigma(\Vars, j) + C(\sigma, j, \iota)\\
    \tau_\stackctr(\iota,\sigma,j) &\equiv \stackctr_{\sigma}(j + 1) = \stackctr_{\sigma}(j) + \alpha(\iota) - \delta(\iota)\\
    \tau_\pres(\iota,\sigma,j) &\equiv \forall\, n. n < \stackctr_{\sigma}(j) - \delta(\iota)
                                 \rightarrow  \stack_{\sigma}(\Vars,j + 1,n) = \stack_{\sigma}(\Vars,j,n) \\
    \tau_\halt(\iota,\sigma,j) &\equiv \halt_{\sigma}(j + 1) = \halt_{\sigma}(j) \lor \stackctr_{\sigma}(j) - \delta(\iota) < 0
                                 \lor \stackctr_{\sigma}(j) - \delta(\iota) + \alpha(\iota) > 2^{10}
  \end{align*}
  Here $C(\sigma, j, \iota)$ is the gas cost of executing instruction
  $\iota$ on state $\sigma$ after $j$ steps.
\end{definition}
The formula~$\tau_\gas$ adds the cost of $\iota$ to the gas cost incurred so
far. The formula~$\tau_\stackctr$ updates the counter for the number of words on
the stack according to $\delta$ and $\alpha$. The formula~$\tau_\pres$ expresses
that all words on the stack below $\stackctr_{\sigma}(j) - \delta(\iota)$ are
preserved. Finally, $\tau_\halt$ captures that exceptions relevant to the stack
can occur through either an underflow or an overflow, and that once it has occurred an
exceptional halt state persists.
For now the only other component we need is how the instructions affect the
stack $\stack$, \ie, a formula $\tau_\stack(\iota, \sigma, j)$.  Here we only
give an example and refer to our implementation or the yellow
paper~\cite{2018_wood} for details. We have
\begin{align*}
  \tau_\stack(\ADD, \sigma, j)
  &\equiv \stack_\sigma(\Vars, j+1, \stackctr_{\sigma}(j + 1) - 1)\\
  &= \stack_\sigma(\Vars, j, \stackctr_{\sigma}(j) - 1)
    + \stack_\sigma(\Vars, j, \stackctr_{\sigma}(j) - 2)
\end{align*}
Finally these formulas yield an encoding for the semantics of an instruction.
\begin{definition}
  For an instruction $\iota$ and state $\sigma$ we define
  \begin{align*}
    \tau(\iota, \sigma, j)  \equiv \tau_\stack(\iota, \sigma, j)
    \land \tau_\stackctr(\iota,\sigma, j)
    \land \tau_\gas(\iota,\sigma, j)
    \land \tau_\halt(\iota,\sigma,j)
    \land \tau_\pres(\iota,\sigma,j)
  \end{align*}
\end{definition}

Then to encode the semantics of a program~$p$ all we need to do is
to apply $\tau$ to the instructions of $p$.

\begin{definition}
  For a program $p = \iota_0 \cdots \iota_n$ we set
  $\tau(p,\sigma) \equiv \bigwedge_{0\leqslant j \leqslant n}
  \tau(\iota_j,\sigma,j)$.
\end{definition}

Before building an encoding for superoptimization we consider another
aspect of the \evm for our state representation: storage and
memory. The gas cost for storing words depends on the words that are
currently stored. Similarly, the cost for using memory depends on the
number of bytes currently used. This is why the cost of an instruction
$C(\sigma, j, \iota)$ depends on the state and the function
$\gas_\sigma$ accumulating gas cost depends on $\Vars$.

To add support for storage and memory to our encoding there are two natural
choices: the theory of arrays or an Ackermann encoding. However, since we have
not used arrays so far, they would require the solver to deal with an additional
theory. For an Ackermann encoding we only need uninterpreted functions, which we
have used already.
Hence, to represent storage in our encoding we extend states with an
uninterpreted function $\storage(\Vars, j, k)$, which returns the word at
key~$k$ after the program has executed $j$~instructions.
Similarly to how we set up the initial stack we need to deal with the values
held by the storage before the program is executed.
Thus, to initialize $\storage$ we introduce fresh variables to
represent the initial contents of the storage. More precisely, for all
\SLOAD and \SSTORE instructions occurring at positions
$j_1, \ldots, j_\ell$ in the source program, we introduce fresh
variables $s_1, \ldots, s_\ell$ and add them to $\Vars$. Then for a
state $\sigma$ we initialize $\storage_\sigma$ by adding the following
conjunct to the initialization constraint from
Definition~\ref{def:init}:
\[
  \forall w.\ \storage_\sigma(\Vars, 0, w) = \ite(
  w = a_{j_1}, s_1, \ite(w = a_{j_2}, s_2,\ldots, \ite(w = a_{j_\ell}, s_\ell, w_{\bot})))
\]
where $a_j = \stack_\sigma(\Vars, j, \stackctr(j) - 1)$ and $w_\bot$ is
the default value for words in the storage.

The effect of the two storage instructions $\SLOAD$ and $\SSTORE$ can then be
encoded as follows:
\begin{align*}
  \tau_\stack(\SLOAD, \sigma, j)
  &\equiv \stack_\sigma(\Vars, j+1, \stackctr_\sigma(j + 1) - 1) = \storage(\Vars, j, \stack_\sigma(\Vars, j, \stackctr_\sigma(j) - 1))\\
    \tau_\storage(\SSTORE, \sigma, j)
  &\equiv \forall w.\ \storage_\sigma(\Vars, j+1, w) =\\
  &\quad\ite(w = \stack_\sigma(\Vars, j, \stackctr_\sigma(j) - 1),  \stack_\sigma(\Vars, j, \stackctr_\sigma(j) - 2), \storage_\sigma(\Vars, j, w))
\end{align*}
Moreover all instructions except $\SSTORE$ preserve the storage,
that is, for $\iota \neq \SSTORE$ we add the following conjunct to
$\tau_\pres(\iota, \sigma, j)$:
\[ \forall w.\ \storage_\sigma(\Vars, j + 1, w) = \storage_\sigma(\Vars, j, w) \]

To encode memory a similar strategy is an obvious way to go.  However,
we first want to evaluate the solver's performance on the encodings
obtained when using stack and storage.  Since the solver already
struggled, due to the size of the programs and the number of
universally quantified variables, see Section~\ref{sec:evaluation}, we
have not yet added an encoding of memory.

Finally, to use our encoding for superoptimization we need an encoding of
equality for two states after a certain number of instructions. Either to ensure
that two programs are equivalent (they start and end in equal states) or
different (they start in equal states, but end in different ones). The following
formula captures this constraint.
\begin{definition}
  For states $\sigma_1$ and $\sigma_2$ and program locations $j_1$ and $j_2$ we define
  \begin{align*}
    \epsilon(\sigma_1,\sigma_2, j_1, j_2)
    &\equiv
      \stackctr_{\sigma_1}(j_1) = \stackctr_{\sigma_2}(j_2)
      \land \halt_{\sigma_1}(j_1) = \halt_{\sigma_2}(j_2)\\
    &\land \forall\, n. n <
      \stackctr_{\sigma_1}(j_1) \rightarrow \stack_{\sigma_1}(\Vars,j_1,n) =
      \stack_{\sigma_2}(\Vars,j_2,n)\\
    &\land \forall\, w. \storage_{\sigma_1}(\Vars, j_1, w) = \storage_{\sigma_2}(\Vars, j_2, w)
  \end{align*}
\end{definition}
Since we aim to improve gas consumption, we do not demand equality for $\gas$.

We now have all ingredients needed to implement basic superoptimization:
simply enumerate all possible programs ordered by gas cost and use the encodings
to check equivalence. However, since already for one \lstinline!PUSH! there are
$2^{256}$ possible arguments, this will not produce results in a reasonable
amount of time. Hence we use templates as described in
Section~\ref{sec:superopt}. We introduce an uninterpreted function $\tmpl(j)$
that maps a program location $j$ to a word, which will be the argument of
\lstinline!PUSH!. The solver then fills these templates and we can
get the values from the model. This is a step forward, but since we have
\num{80} encoded instructions, enumerating all permutations still yields too
large a search space.
Hence we use an encoding similar to the CEGIS
algorithm~\cite{2011_gulwani_et_al}. Given a collection of
instructions, we formulate a constraint representing all possible
permutations of these instructions. It is satisfiable if there is a
way to connect the instructions into a target program that is
equivalent to the source program. The order of the instructions can
again be reconstructed from the model provided by the solver.
More precisely given a source program $p$ and a list of candidate instructions
$\iota_1,\ldots,\iota_n$, \textsc{EncodeBso} from Algorithm~\ref{alg:so}(\subref{alg:bso})
takes variables $j_1,\ldots,j_n$ and two states $\sigma$ and
$\sigma'$ and builds the following formula
\begin{align*}
  \forall \Vars.\,
  &\epsilon(\sigma,\sigma',0,0) \land \epsilon(\sigma,\sigma',|p|,n)
    \land \tau(p,\sigma)\\
  & \land \bigwedge_{1 \leqslant \ell \leqslant n} \tau(\iota_\ell,\sigma',j_\ell)
    \land \bigwedge_{1\leqslant \ell < k \leqslant n} j_\ell \neq j_k
    \land \bigwedge_{1\leqslant \ell \leqslant n} j_\ell \geqslant 0 \land j_\ell < n
\end{align*}
Here the first line encodes the source program, and says that the
start and final states of the two programs are equivalent. The second
line encodes the effect of the candidate instructions and enforces
that they are all used in some order. If this formula is satisfiable
we can simply get the $j_i$ from the model and reorder the candidate
instructions accordingly to obtain the target program.

Unbounded superoptimization shifts even more of the search into the
solver, encoding the search space of all possible programs.
To this end we take a variable~$n$, which represents the number of
instructions in the target program and an uninterpreted function
$\instr(j)$, which acts as a template, returning the instruction to be
used at location $j$. Then, given a set of candidate instructions the
formula to encode the search can be built as follows:
\begin{definition}
  Given a set of instructions $\CI$ we define the formula $\rho(\sigma,n)$
  as
  \begin{align*}
  \forall j.\, j \geqslant 0 \land j < n
  &\rightarrow \bigwedge_{\iota \in \CI} \instr(j) = \iota \to \tau(\iota,\sigma,j)
    \land \bigvee_{\iota \in \CI} \instr(j) = \iota
\end{align*}
Finally, the constraint produced by \textsc{EncodeUso} from
Algorithm~\ref{alg:so}(\subref{alg:uso}) is
\begin{align*}
  \forall \Vars.\,
  &\tau(p,\sigma) \land \rho(\sigma', n) \land \epsilon(\sigma,\sigma', 0, 0)
  \land \epsilon(\sigma,\sigma',|p|,n) \land \gas_\sigma(\Vars, |p|) > \gas_{\sigma'}(\Vars, n)
\end{align*}
\end{definition}

During our experiments we observed that the solver struggles to show that the
formula is unsatisfiable when $p$ is already optimal. To help in these cases we
additionally add a bound on $n$: since the cheapest \evm instruction has gas
cost~\num{1}, the target program cannot use more instructions than the gas
cost of $p$, \ie, we add $n \leqslant \gas_\sigma(\Vars, |p|)$.

In our application domain there are many instructions that fetch
information from the outside world. For instance, \ADDRESS\ gets the
\ethereum address of the account currently executing the bytecode of
this smart contract. Since it is not possible to know these values at
compile time we cannot encode their full semantics. However, we would
still like to take advantage of structural optimizations where these
instructions are involved, \eg, via \lstinline!DUP! and \lstinline!SWAP!.

\begin{example}
  \label{ex:address-dup}
  Consider the program \lstinline!ADDRESS DUP1!. The same effect can be achieved
  by simply calling \lstinline!ADDRESS ADDRESS!. Duplicating words on the stack,
  if they are used multiple times, is an intuitive approach. However, because
  executing $\ADDRESS$ costs \SI{2}{\gas} and $\DUP{1}$ costs \SI{3}{\gas},
  perhaps unexpectedly, the second program is cheaper.
\end{example}

To find such optimizations we need a way to encode $\ADDRESS$ and
similar instructions. For our purposes, these instructions have in
common that they put arbitrary but fixed words onto the
stack. Analogous to uninterpreted functions, we call them
\emph{uninterpreted instructions} and collect them in the set $\UI$.
To represent their output we use universally quantified
variables---similar to input variables.  To encode the effect
uninterpreted instructions have on the stack, \ie, $\tau_\stack$, we
distinguish between \emph{constant} and \emph{non-constant}
uninterpreted instructions.

Let $\ui_c(p)$ be the set of \emph{constant uninterpreted
instructions} in $p$, \ie
$\ui_c(p) = \{ \iota \in p \mid \iota \in \UI \land \delta(\iota) = 0
\}$.  Then for $\ui_c(p) = \{\iota_1, \ldots, \iota_k\}$ we take
variables $u_{\iota_1},\ldots,u_{\iota_k}$ and add them to $\Vars$,
and thus to the arguments of the state function~$\stack$. The formula
$\tau_\stack$ can then use these variables to represent the unknown
word produced by the uninterpreted instruction, \ie, for
$\iota \in \ui_c(p)$ with the corresponding variable $u_\iota$ in
$\Vars$, we set
$\tau_\stack(\iota, \sigma, j) \equiv \stack_\sigma(\Vars, j + 1,
\stackctr_\sigma(j)) = u_{\iota}$.

For a \emph{non-constant instruction}~$\iota$, such as \BLOCKHASH\ or
\BALANCE, the word put onto the stack by $\iota$ depends on the top
$\delta(\iota)$ words of the stack. We again model this dependency
using an uninterpreted function. That is, for every non-constant
uninterpreted instruction~$\iota$ in the source program~$p$,
$\ui_{n}(p) = \{ \iota \in p \mid \iota \in \UI \land \delta(\iota) >
0 \}$, we use an uninterpreted function~$f_\iota$. Conceptually, we
can think of $f_\iota$ as a read-only memory initialized with the
values that the calls to $\iota$ produce.

\begin{example}
  \label{ex:blockhash}
  The instruction $\BLOCKHASH$ gets the hash of a given block~$b$.  Thus
  optimizing the program %
  \lstinline[mathescape=true]!PUSH $b_1$ BLOCKHASH PUSH $b_2$ BLOCKHASH!
  depends on the values $b_1$
  and $b_2$. If $b_1 = b_2$ then the cheaper program
  \lstinline[mathescape=true]!PUSH $b_1$ BLOCKHASH DUP1!
  yields the same state as the original program.
\end{example}

To capture this behaviour, we need to associate the arguments $b_1$ and $b_2$ of
\BLOCKHASH\ with the two different results they may produce. As with constant
uninterpreted instructions, to model arbitrary but fixed results, we add fresh
variables to $\Vars$.  However, to account for different results produced by
$\ell$ invocations of $\iota$ in $p$ we have to add $\ell$ variables.
Let $p$ be a program and $\iota \in \ui_n(p)$ a unary instruction which appears
$\ell$ times at positions $j_1, \ldots, j_\ell$ in $p$. For variables
$u_{1}, \ldots, u_{\ell}$, we initialize $f_\iota$ as follows:
\begin{align*}
  \forall w.\, f_\iota(\Vars, w) = \ite(w = a_{j_1}, u_1,
  \ite(w = a_{j_2}, u_2, \ldots,
  \ite(w = a_{j_\ell}, u_\ell, w_{\bot})))
\end{align*}
where $a_j$ is the word on the stack after $j$ instructions in~$p$,
that is $a_j = \stack_\sigma(\Vars, j, \stackctr(j)-1)$,
and $w_\bot$ is a default word.

This approach straightforwardly extends to instructions with more than one
argument. Here we assume that uninterpreted instructions put exactly one word
onto the stack, \ie, $\alpha(\iota) = 1$ for all $\iota \in \UI$. This
assumption is easily verified for the \evm: the only instructions with
$\alpha(\iota) > 1$ are \DUP{} and \SWAP{}.  Finally we set the effect a
non-constant uninterpreted instruction $\iota$ with associated function
$f_\iota$ has on the stack:
\[
  \tau_\stack(\iota, \sigma, j) \equiv
  \stack_\sigma(\Vars, j+1, \stackctr_\sigma(j + 1) - 1) =
  f_{\iota}(\Vars,  \stack_\sigma(\Vars, j, \stackctr_\sigma(j) - 1))
\]

For some uninterpreted instructions there might a be way to partially encode
their semantics. The instruction \BLOCKHASH returns \pusharg{0} if it is called
for a block number greater than the current block number. While the current
block number is not known at compile time, the instruction \NUMBER does return
it. Encoding this interplay between \BLOCKHASH and \NUMBER could potentially be
exploited for finding optimizations.


\section{Implementation} \label{sec:implementation}

We implemented basic and unbounded superoptimization in our
tool~\ebso, which is available under the Apache-2.0 license:
\href{https://github.com/juliannagele/ebso}{\texttt{github.com/juliannagele/ebso}}.
The encoding employed by \ebso uses several background theories:
\begin{enumerate*}[label=\emph{(\roman*})]
\item uninterpreted functions (UF) for encoding the state of the \evm,
  for templates, and for encoding uninterpreted instructions,
\item bit vector arithmetic (BV) for operations on words,
\item quantifiers for initial words on the stack and in the storage, and the
  results of uninterpreted instructions, and
\item linear integer arithmetic (LIA) for the instruction counter.
\end{enumerate*}
Hence following the SMT-LIB classification%
\footnote{\href{http://smtlib.cs.uiowa.edu/logics.shtml}{smtlib.cs.uiowa.edu/logics.shtml}}
\ebso's constraints fall under the logic UFBVLIA. As \smt solver we
chose \ZE~\cite{2008_de_moura_et_al}, version~4.7.1 which we call with
default configurations. In particular, \ZE performed well for the
theory of quantified bit vectors and uninterpreted functions in the
last \smt competition (albeit non-competing).\footnote{%
  \href{https://smt-comp.github.io/2019/results/ufbv-single-query}
  {smt-comp.github.io/2019/results/ufbv-single-query}}

The aim of our implementation is to provide a prototype
without relying on heavy engineering and optimizations
such as exploiting parallelism or tweaking \ZE strategies.
But without any optimization, for the full word size of the
\evm---\SI{256}{\bit}---\ebso did not handle the simple program
\lstinline!PUSH 0 ADD POP! within a reasonable amount of time. Thus
we need techniques to make \ebso viable.
By investigating the models generated by \ZE run with the default
configuration, we believe that the problem lies with the leading
universally quantified variables. And we have plenty of them: for the
input on the stack, for the storage, and for uninterpreted
instructions.
By reducing the word size to a small $k$, we can reduce the search space for
universally quantified variables from $2^{256}$ to some significantly
smaller $2^k$. But then we need to check any target program found
with a smaller word size.

\begin{example}
  The program
  \lstinline!PUSH 0 SUB PUSH 3 ADD! %
  from Example~\ref{ex:intro}
  optimizes to \lstinline!NOT! for word size \SI{2}{\bit}, because
  then the binary representation of \lstinline!3! is all ones. When
  using word size \SI{256}{\bit} this optimization is not correct.
\end{example}

To ensure that the target program has the same semantics for word
size~\SI{256}{\bit}, we use \emph{translation validation}: we ask the
solver to find inputs, which distinguish the source and target
programs, \ie, where both programs start in equivalent states, but
their final state is different. Using our existing machinery this
formula is easy to build:%
\footnote{This approach also allows for other over-approximations. For instance,
  we tried using integers instead of bit vectors, which performed worse.}
\begin{definition} %
  Two programs $p$ and $p'$ are equivalent if
  \[ \TV(p, p', \sigma, \sigma') \equiv %
    {\exists \Vars, \tau(p,\sigma) \land \tau(p',\sigma') \land
      \epsilon(\sigma,\sigma',0,0) \land \lnot
      \epsilon(\sigma,\sigma',|p|,|p'|)} \] %
  is unsatisfiable. Otherwise, $p$ and $p'$ are different, and the
  values for the variables in $\Vars$ from the model are a
  corresponding witness.
\end{definition}

A subtle problem remains: how can we represent the program
\lstinline!PUSH 224981! with only $k$~bit? Our solution is to replace
arguments $a_1, \ldots, a_m$ of \lstinline!PUSH! where
$a_i \geqslant 2^k$ with fresh, universally quantified variables
$c_1, \ldots, c_m$.
If a target program is found, we replace $c_i$ by the original value
$a_i$, and check with translation validation whether this target
program is correct.
A drawback of this approach is that we might lose potential
optimizations.
\begin{example}
  The program \lstinline!PUSH 0b111...111 AND! optimizes to the empty
  program. But, abstracting the argument of \lstinline!PUSH!
  translates the program to \lstinline[mathescape=true]!PUSH $c_i$ AND!, which does not
  allow the same optimization.
\end{example}

Like many compiler optimizations, \ebso optimizes basic blocks.  Therefore we
split \evm bytecode along instructions that change the control flow, \eg
\lstinline!JUMPI!, or \lstinline!SELFDESTRUCT!.  Similarly we further split
basic blocks into (\ebso) blocks so that they contain only encoded instructions.
Instructions, which are not encoded, or encodable, include
instructions that write to memory, \eg \lstinline!MSTORE!, or
the log instructions \lstinline!LOG!.

\begin{lemma} \label{lem:closed-under-context} %
  If program $p$ superoptimizes to program $t$ then in any program we
  can replace $p$ by $t$.
\end{lemma}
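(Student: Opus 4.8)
The plan is to reduce the "in any program" claim to a statement about basic blocks and then to the equivalence notion already fixed by the encoding. Recall that \ebso only ever optimizes (\ebso) blocks, which are maximal subsequences of encoded instructions that neither change the control flow nor touch unencoded state such as memory or logs. So "$p$ superoptimizes to $t$" means in particular that $p$ and $t$ are both such blocks and that $\TV(p,t,\sigma,\sigma')$ is unsatisfiable, i.e.\ whenever $p$ and $t$ start in equivalent states they end in equivalent states (equal stack contents, equal stack counter, equal halting flag, equal storage), while by construction $t$ is no more expensive than $p$ in gas. The first step is to make precise what "replace $p$ by $t$ in any program $q$" means: $q = q_1\, p\, q_2$ as a concatenation of instruction sequences, and we must show $q_1\, p\, q_2$ and $q_1\, t\, q_2$ behave identically on every input, both in the final observable state and — crucially — in the control flow they induce, so that the surrounding context $q_1,q_2$ is genuinely executed the same way in both runs.

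The key steps, in order, are: (i) argue that after executing the common prefix $q_1$ the two executions are in the same \evm state (trivially, since the prefix is identical); (ii) invoke the unsatisfiability of $\TV(p,t,\cdot,\cdot)$ — via $\epsilon$ — to conclude that after executing $p$ versus $t$ the two runs are again in equivalent states as far as the encoded components go (stack, stack counter, halting flag, storage); (iii) observe that $\ebso$ blocks do not read or write the components that $\epsilon$ omits (memory, logs, program counter relative position is irrelevant since we are substituting in place and $\delta,\alpha$ — hence the net stack effect — are preserved by equivalence), so the states are in fact fully equal, not merely $\epsilon$-equivalent, at the point where $q_2$ begins; (iv) conclude by a straightforward induction on the length of $q_2$ that the remaining executions coincide, including any control-flow instruction at the start of $q_2$, which sees identical stack and storage; (v) finally note gas: the total gas of $q_1\,t\,q_2$ is the gas of $q_1$ plus that of $t$ plus that of $q_2$ on the resulting (equal) state, and since $t$ costs at most as much as $p$ on every input, the whole program is no more expensive. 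Replacement therefore never changes semantics and never increases cost.

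The main obstacle is step (iii): justifying that $\epsilon$-equivalence of the states after $p$ and after $t$ is enough to guarantee that the \emph{entire} subsequent execution of $q_2$ — which may include unencoded instructions such as \lstinline!MSTORE!, \lstinline!LOG!, or \lstinline!JUMPI! — proceeds identically. This needs the facts that an \ebso block neither reads nor writes memory, logs, or any state outside stack and storage, and that $\hat\delta(p) = \hat\delta(t)$ together with equal final stack contents means the stack is pointwise identical (not just "equivalent up to the part $q_2$ inspects"), so that even the absolute stack height matches. Since $\epsilon$ explicitly asserts equality of $\stackctr$, of all stack entries below $\stackctr$, of $\halt$, and of all of storage, and since memory and logs are untouched by both blocks, the two \evm states agree on every component that any subsequent instruction can observe; this is what closes the argument. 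A minor subtlety to mention is the halting flag: if $p$ exceptionally halts on some input then so does $t$ (by $\epsilon$ on $\halt$), and in both cases the surrounding context stops identically, so the claim holds vacuously on such inputs.
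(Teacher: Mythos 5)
Your proposal is correct and follows essentially the same route as the paper's proof: both reduce the claim to determinism of the surrounding instructions (so the two runs agree before entering and, by induction over the context, after leaving the replaced block) combined with the state equivalence guaranteed by the superoptimization check. Your step (iii) --- arguing that $\epsilon$-equivalence together with the fact that \ebso blocks touch neither memory nor logs yields full state equality at the start of the suffix --- makes explicit a point the paper's one-line appeal to the induction hypothesis leaves implicit.
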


\begin{proof} %
  We show the statement by induction on the program context
  $(c_1, c_2)$ of the program $c_1 p c_2$. By assumption, the
  statement holds for the base case $([\ ], [\ ])$. For the step case
  $(\iota c_1, c_2)$, we observe that every instruction~$\iota$ is
  deterministic, \ie executing $\iota$ starting from a state~$\sigma$
  leads to a deterministic state~$\sigma'$. By induction hypothesis,
  executing ${c_1 p c_2}$ and ${c_1 t c_2}$ from a state~$\sigma'$
  leads to the same state $\sigma''$, and therefore we can replace
  ${{\iota c_1} p c_2}$ by ${{\iota c_1} t c_2}$. We can reason
  analogously for $(c_1, c_2 \iota)$.
\end{proof}


\section{Evaluation}\label{sec:evaluation}

We evaluated \ebso on two real-word data sets:
\begin{enumerate*}[label=\emph{(\roman*})]
\item optimizing an already highly optimized data set in
  Section~\ref{sec:gg:evaluation}, and
\item a large-scale data set from the \ethereum blockchain to compare
  basic and unboundend superoptimization in
  Section~\ref{sec:bc:evaluation}.
\end{enumerate*}
We use \ebso to extract \ebso blocks from our data sets. From the
extracted blocks
\begin{enumerate*}[label=\emph{(\roman*)}]
\item we remove duplicate blocks, and
\item we remove blocks which are only different in the arguments of
  \lstinline!PUSH! by abstracting to word size \SI{4}{bit}.
\end{enumerate*}
We run both evaluations on a cluster~\cite{2017_king_et_al} consisting
of nodes running Intel Xeon E5645 processors at
\SI{2.40}{\giga\hertz}, with one core and \SI{1}{\gibi\byte} of memory
per instance.

We successfully validated all optimizations found by \ebso by running
a reference implementation of the \evm on pseudo-random
input. Therefore, we run the bytecode of the original input block and
the optimized bytecode to observe that both produce the same final
state. The \evm implementation we use is
\bsh{go-ethereum}\footnote{\href{https://github.com/ethereum/go-ethereum}{github.com/ethereum/go-ethereum}}
version \bsh{1.8.23}.

\subsection{Optimize the Optimized} \label{sec:gg:evaluation}

This evaluation tests \ebso against human intelligence. Underlying our
data set are \num{200}~\solidity contracts ($\ggRaw$) we collected from
the \emph{1st Gas Golfing Contest}.%
\footnote{\href{https://g.solidity.cc/}{g.solidity.cc}}
In that contest competitors had to write the most gas-efficient \solidity
code for five given challenges: %
\begin{enumerate*}[label=\emph{(\roman*)}]
\item integer sorting,
\item implementing an interpreter,
\item hex decoding,
\item string searching, and
\item removing duplicate elements.
\end{enumerate*}
Every challenge had two categories: \emph{standard} and \emph{wild}. For
wild, any \solidity feature is allowed---even inlining \evm
bytecode. The winner of each track received \SI{1}{Ether}.
The Gas Golfing Contest provides a very high-quality data set: the
\evm bytecode was not only optimized by the \solc compiler, but also
by humans leveraging these compiler optimizations and writing inline
code themselves.
To collect our data set~$\ggData$, we first compiled the \solidity
contracts in $\ggRaw$ with the same set-up as in the contest.%
\footnote{Namely, \bsh{\$ solc --optimize --bin-runtime
    --optimize-runs 200} with \solc compiler version~\bsh{0.4.24}
  available at
  \href{https://github.com/ethereum/solidity/tree/v0.4.24}%
  {github.com/ethereum/solidity/tree/v0.4.24}.} One contract in the
wild category failed to compile and was thus excluded
from~$\ggRaw$.
From the generated \bsh{.bin-runtime} files, we extracted our
final data set~$\ggData$ of $\ggBlockcount$~distinct blocks.

For this evaluation, we run \ebso in its default mode: unbounded
superoptimization. We run unbounded superoptimization because, as can be seen in
Section~\ref{sec:bc:evaluation}, in our context unbounded superoptimization
outperformed basic superoptimization. As time-out for this challenging
data set, we estimated \SI{1}{\hour} as reasonable.

\begin{table}[t]
  \setlength{\tabcolsep}{6pt}
  \centering
  \begin{tabular}{ r r r}
\toprule
& \# & \SI{}{\percent} \\

\midrule
optimized (optimal) & \num{19} (\num{10}) & \SI{0.69}{\percent} (\SI{0.36}{\percent}) \\
proved optimal & \num{481} & \SI{17.54}{\percent} \\
time-out (trans.\ val.\ failed) & \num{2243} (\num{196}) & \SI{81.77}{\percent} (\SI{7.15}{\percent}) \\
\bottomrule\\
\end{tabular}

  \caption{Aggregated results of running \ebso on $\ggData$.}
  \label{tab:gg:results}
\end{table}
Table~\ref{tab:gg:results} shows the aggregated results of running
\ebso on $\ggData$. In total, \ebso optimizes \num{19}~blocks out of
\ggBlockcount, \num{10}~of which are shown to be optimal. %
%
Moreover, \ebso can prove for more than \SI{17}{\percent} of blocks in
$\ggData$ that they are already optimal.
It is encouraging that \ebso even finds optimizations in this already highly
optimized data set. The quality of the data set is supported by the high
percentage of blocks being proved as optimal by \ebso.
Next we examine three found optimizations more closely.
Our favorite optimization
\lstinline!POP PUSH 1 SWAP1 POP PUSH 0! to %
\lstinline!SLT DUP1 EQ PUSH 0! %
witnesses that superoptimization can find unexpected results, and that
unbounded superoptimization can stop with non-optimal results:
\lstinline!SLT DUP1 EQ! is, in fact, a round-about and optimizable way
to pop two words from the stack and push \pusharg{1} on the
stack.  Some optimizations follow clear patterns.
The optimizations  %
\lstinline!CALLVALUE DUP1 ISZERO PUSH 81! to %
\lstinline!CALLVALUE CALLVALUE ISZERO PUSH 81! %
and \lstinline!CALLVALUE DUP1 ISZERO PUSH 364! to %
\lstinline!CALLVALUE CALLVALUE ISZERO PUSH 364! %
are both based on the fact that \lstinline!CALLVALUE! is cheaper than
\DUP{1}. Finding such patterns and generalizing them into peephole
optimization rules could be interesting future work.

Unfortunately, \ebso hit a time-out in nearly \SI{82}{\percent} of all cases,
where we count a failed translation validation as part of the time-outs, since
in that case \ebso continues to search for optimizations after increasing the
word size.



\subsection{Unbounded \vs Basic Superoptimization}
\label{sec:bc:evaluation}

In this evaluation we compare unbounded and basic superoptimization,
which we will abbreviate with \uso and \bso, respectively. To compare
\uso and \bso, we want a considerably larger data set. Fortunately,
there is a rich source of \evm bytecode accessible: contracts deployed
on the \ethereum blockchain. Assuming that contracts that are called
more often are well constructed, we queried the \num{2500} most called
contracts\footnote{up to block number \num{7300000} deployed on
  Mar-04-2019 01:22:15 AM +UTC}
using \google
BigQuery.\footnote{\href{https://cloud.google.com/blog/products/data-analytics/ethereum-bigquery-public-dataset-smart-contract-analytics}
  {cloud.google.com/blog/products/data-analytics/ethereum-bigquery-public-dataset-smart-contract-analytics}}
From them we extract our data set~$\bcData$ of $\bcBlockcount$
distinct blocks.
For this considerably larger data set, we estimated a cut-off point of
\SI{15}{\min} as reasonable. One limitation is that, due to the high volume, we
only run the full evaluation once.

\begin{table}[t]
  \setlength{\tabcolsep}{6pt}
  \centering
  \begin{tabular}{ r r r r r}
\toprule
& \multicolumn{2}{l}{$\uso$} & \multicolumn{2}{l}{$\bso$} \\
& \# & \SI{}{\percent} & \# & \SI{}{\percent} \\

\midrule
optimized (optimal) & \num{943} (\num{393}) & \SI{1.54}{\percent} (\SI{0.64}{\percent}) & \num{184} & \SI{0.3}{\percent} \\
proved optimal & \num{3882} & \SI{6.34}{\percent} & \num{348} & \SI{0.57}{\percent} \\
time-out (trans.\ val.\ failed) & \num{56392} (\num{1467}) & \SI{92.12}{\percent} (\SI{2.4}{\percent}) & \num{60685} & \SI{99.13}{\percent} \\
\bottomrule\\
\end{tabular}

  \caption{Aggregated results of running \ebso with $\uso$ and $\bso$
    on $\bcData$.}
  \label{tab:bc:stats}
\end{table}

Table~\ref{tab:bc:stats} shows the aggregated results of running \ebso on
$\bcData$. Out of $\bcBlockcount$ blocks in $\bcData$, \ebso
finds~\num{943}~optimizations using \uso out of which it proves~\num{393} to be
optimal. Using~\bso~\num{184} optimizations are found. Some blocks were shown to
be optimal by both approaches. Also, both approaches time out in a majority of
the cases: \uso in more than \SI{92}{\percent}, and \bso in more than
\SI{99}{\percent}.
%
Over all $\bcBlockcount$ blocks the total amount of gas saved for \uso is
$\bcUsoTotal$ and $\bcBsoTotal$ for \bso. For all blocks where an optimization
is found, the average gas saving per block in \uso is $\bcUsoRelative$, and
$\bcBsoRelative$ for \bso. The higher average for \bso can be explained by
\begin{enumerate*}[label=\emph{(\roman*})]
\item \bso's bias for smaller blocks, where relative savings are
  naturally higher, and
\item \bso only providing optimal results, whereas \uso may find
  intermediate, non-optimal results.
\end{enumerate*}
The optimization with the largest gain, is one which we did not
necessarily expect to find in a deployed contract: a redundant storage
access. Storage is expensive, hence optimized for in deployed
contracts, but \uso and \bso both found~\bcUsoMaxSource which
optimizes to the empty program---because the program basically loads
the value from key~4 only to store it back to that same
key. This optimization saves at least \bcUsoMaxGas, but up to
\SI{20220}{\gas}.

From Table~\ref{tab:bc:stats} we see that on $\bcData$, \uso
outperforms \bso by roughly a factor of five on found optimizations;
more than ten times as many blocks are proved optimal by \uso than by
\bso. %
As we expected, most optimizations found by \bso were also found by
\uso, but surprisingly, \bso found~\bcBsoOnly~optimizations, on which
\uso failed. We found that nearly all of the~\bcBsoOnly~source programs are
fairly complicated, but have a short optimization of two or three
instructions. To pick an example, the block %
\lstinline!PUSH 0 PUSH 12 SLOAD LT ISZERO ISZERO ISZERO PUSH 12250! %
is optimized to the relatively simple %
\lstinline!PUSH 1 PUSH 12250!---a candidate block, which will be tried
early on in \bso.  Additionally, all~\num{21} blocks are cheap: all
cost less than \SI{10}{\gas}.  We also would have expected at least
some of these optimizations to have been found by \uso. We believe
internal unfortunate, non-deterministic choice within the solver to be
the reason that it did not.



\section{Conclusion} \label{sec:conclusion}

\paragraph{Summary.} %
We develop \ebso, a superoptimizer for \evm bytecode, implementing two
different superoptimization approaches and compare them on a large set
of real-world smart contracts. Our experiments show that, relying on
the heavily optimized search heuristics of a modern \smt solver is a
feasible approach to superoptimizing \evm bytecode.
\paragraph{Related Work.} %
Superoptimization~\cite{1987_massalin} has been explored for a variety
of different contexts~\cite{2002_joshi_et_al, 2013_schkufza_et_al,
  2016_phothilimthana_et_al, 2017_jangda_et_al}, including binary
translation~\cite{2008_bansal_et_al} and synthesizing compiler
optimizations~\cite{2017_sasnauskas_et_al}. To our knowledge \ebso is
the first application of superoptimization to smart contracts.

Chen et al.~\cite{2018_chen_et_al} also aim to save gas by optimizing
\evm bytecode. They identified 24~anti patterns by manual
inspection. Building on their work we run \ebso on their identified
patterns. For 19~instances, \ebso too found the same
optimizations. For 2 patterns, \ebso lacks encoding of the
instructions (\STOP, \JUMP), and for 2~patterns \ebso times out on a
local machine.

Due to the repeated exploitation of flaws in smart contracts, various formal
approaches for analyzing \evm bytecode have been proposed.  For instance
Oyente~\cite{2016_luu_et_al} performs control flow analysis in order to detect
security defects such as reentrancy bugs.
\paragraph{Outlook.} %
There is ample opportunity for future work. We do not yet support the
\evm's memory. While conceptually this would be a straightforward
extension, the number of universally quantified variables and size of
blocks are already posing challenges for performance, as we identified
by analyzing the optimizations found by \ebso.

Thus, it would be interesting to use \smt benchmarks obtained by
\ebso's superoptimization encoding to evaluate different solvers, \eg
\CVCF\footnote{\href{http://cvc4.cs.stanford.edu/web}{cvc4.cs.stanford.edu/web/}}
or
\vampire\footnote{\href{http://www.vprover.org/}{www.vprover.org}}. %
The basis for this is already in place: \ebso can export the generated
constraints in SMT-LIB format. Accordingly, we plan to generate new
SMT benchmarks and submit them to one of the suitable categories of
SMT-LIB.

In order to ease the burden on developers \ebso could benefit from
caching common optimization patterns~\cite{2017_sasnauskas_et_al} to
speed up optimization times. Another fruitful approach
could be to extract the optimization patterns and generalize them into
peephole optimizations and rewrite rules.
%


\bibliographystyle{splncs04}
\bibliography{refs}
\newpage
\appendix

\end{document}